\theoremstyle{plain}
\newtheorem{thm}{\protect\theoremname}
\newtheorem{lem}[thm]{\protect\lemmaname}
\definecolor{myurlcolor}{rgb}{0,0,0.7}
\providecommand{\theoremname}{Theorem}
\providecommand{\lemmaname}{Lemma}
\begin{document}
\title{Is there a finite complete set of monotones in any quantum resource
theory?}

\author{Chandan Datta}
\email{dattachandan10@gmail.com}
\author{Ray Ganardi}
\author{Tulja Varun Kondra}
\author{Alexander Streltsov}
\email{a.streltsov@cent.uw.edu.pl}

\affiliation{Centre for Quantum Optical Technologies, Centre of New Technologies,
University of Warsaw, Banacha 2c, 02-097 Warsaw, Poland}

\begin{abstract}
  Entanglement quantification aims to assess the value of quantum states for quantum information processing tasks.
  A closely related problem is state convertibility, asking whether two remote parties can convert a shared quantum state into another one without exchanging quantum particles.
  Here, we explore this connection for quantum entanglement and for general quantum resource theories.
  For any quantum resource theory which contains resource-free pure states, we show that there does not exist a finite set of resource monotones which completely determines all state transformations.
  We discuss how these limitations can be surpassed, if discontinuous or infinite sets of monotones are considered, or by using quantum catalysis.
  We also discuss the structure of theories which are described by a single resource monotone and show equivalence with totally ordered resource theories.
  These are theories where a free transformation exists for any pair of quantum states.
  We show that totally ordered theories allow for free transformations between all pure states.
  For single-qubit systems, we provide a full characterization of state transformations for any totally ordered resource theory.
\end{abstract}

\maketitle

Entangled quantum systems can exhibit features which seem to contradict
our intuition, based on our ``classical'' perception of nature~\cite{HorodeckiRevModPhys.81.865}.
Even Einstein was puzzled by some of the consequences of entanglement,
concluding that quantum theory cannot be complete~\cite{EinsteinPhysRev.47.777}. Today,
entangled quantum systems are actively explored as an important ingredient
of the emerging quantum technologies~\citep{HorodeckiRevModPhys.81.865}.
This includes applications such as quantum key distribution \cite{Ekert_PhysRevLett.67.661},
where entangled systems are used to establish a provably secure key
for communication between distant parties. Another groundbreaking
application of entanglement is quantum teleportation \cite{Bennett_PhysRevLett.70.1895}, allowing
to send the state of a quantum system to a remote party by using shared
entanglement and classical communication.

The development of a resource theory of entanglement~\citep{HorodeckiRevModPhys.81.865}
made it possible to study the role of entanglement for technology
in a systematic way. This theory introduced the distant lab paradigm,
with two remote parties (Alice and Bob) being equipped with local
quantum laboratories, and connected via a classical communication
channel \cite{Bennett_purification_PhysRevLett.76.722,BennettPhysRevA.54.3824,Bennett_asymptotic_PhysRevA.53.2046}. It has been noticed that entanglement between
Alice and Bob cannot be created in this setting. Thus, entangled states
become a valuable resource, allowing the remote parties to perform
tasks which are not possible without it.

In recent years, it became clear that not all quantum technological
tasks are based on entanglement, but can make use of other quantum
features, such as quantum coherence~\cite{BaumgratzPhysRevLett.113.140401,StreltsovRevModPhys.89.041003}, contextuality \cite{Kochen1967,Howard2014,Budroni_2021}, or imaginarity
\cite{Hickey_2018,WuPhysRevLett.126.090401,WuPhysRevA.103.032401,Renou_2021}. This has led to the development of general quantum resource
theories~\citep{ChitambarRevModPhys.91.025001}. In analogy to entanglement,
a quantum resource theory is based on the set of free states $\{\rho_{f}\}$
and free operations $\{\Lambda_{f}\}$. All states which are not free are called resource states. A free operation cannot create resource states from free states. The sets of free states and operations can be motivated
by physical constraints, as is done e.g. in the resource theory of
quantum thermodynamics~\citep{BrandaoPhysRevLett.111.250404,Goold_2016},
where the free state is the Gibbs state, and the free operations preserve
the total energy of the system and a heat bath~\citep{Janzing2000}.
Another motivation for a resource theory can arise from symmetries,
where the free states and operations are symmetric with respect to
some physical transformations. An example for such theory is the
resource theory of asymmetry~\citep{Gour_2008}. Also, the resource theory of coherence can be formulated in this framework, if the free states are diagonal in a reference basis, and the free operations are dephasing covariant~\cite{ChitambarPhysRevLett.117.030401,ChitambarPhysRevA.94.052336,ChitambarPhysRevA.95.019902,MarvianPhysRevA.94.052324,ChitambarPhysRevA.97.050301}. Similarly, the resource theory of imaginarity has free states which have only real elements in a reference basis, and the free operations are covariant with respect to transposition~\cite{Kondra2022}.

Two fundamental problems in any quantum resource theory are \emph{state
convertibility} and \emph{resource quantification}. The state convertibility
problem is asking whether for two quantum states there exists a free
operation converting one state into the other. The goal of resource
quantification is to quantify the amount of the resource in a quantum
state. In general, there is no unique quantifier which captures all
aspects of a resource theory, and a suitable quantifier depends on the
concrete problem under study.

There are some elementary properties which are common to all resource
quantifiers~\citep{ChitambarRevModPhys.91.025001}. Recalling that
resource states cannot be created from free states via free operations, it is intuitive
to assume that the degree of the resource in a quantum system cannot
increase under free operations, even if the initial state is not free.
Thus, every meaningful resource quantifier should not increase under free operations~\citep{BennettPhysRevA.54.3824,VedralPhysRevLett.78.2275,Vidal2000,ChitambarRevModPhys.91.025001}:
\begin{equation}
R(\Lambda_{f}[\rho])\leq R(\rho)\label{eq:Monotone-1}
\end{equation}
for any state $\rho$ and any free operation $\Lambda_{f}$. Quantifiers
having this property are also called \emph{resource monotones}.

Both problems mentioned above -- state convertibility and resource
quantification -- are in fact closely connected. A state $\rho$
can be converted into $\sigma$ via free operations if and only if
\begin{equation}
R(\rho)\geq R(\sigma)\label{eq:Monotone-2}
\end{equation}
holds true for all resource monotones~\citep{TakagiPhysRevX.9.031053}.
On the other hand,
the fact that Eq.~(\ref{eq:Monotone-2}) holds for some resource
monotone $R$ does not guarantee that the transformation $\rho\rightarrow\sigma$
is possible via free operations. There might however exist a \emph{complete
set of resource monotones} $\{R_{i}\}$ which completely characterizes
all state transformations, i.e., a transformation $\rho\rightarrow\sigma$ is possible if and only if $R_i(\rho) \geq R_i(\sigma)$ holds true for all $i$. The first such complete set of monotones
has been presented for bipartite pure states in entanglement theory~\citep{PhysRevLett.83.436,VidalPhysRevLett.83.1046},
and it was shown that there is no finite set of faithful and strongly monotonic entanglement monotones
which can capture transformations between all mixed states~\citep{GourPhysRevA.72.022323}.
Complete sets of monotones for concrete resource theories have been studied~\cite{PhysRevLett.122.140403,PhysRevX.8.021033,PhysRevA.95.062314,Gour2018,PhysRevLett.125.180505}, and constructions for general quantum resource theories have been presented in~\citep{TakagiPhysRevX.9.031053}. It is worth noting that quantum resource theories
that are completely governed by a majorization relation have a finite set of monotones~\citep{majorization_complete, majorization_complete_1}.

\medskip

\textbf{\emph{Finite sets of resource monotones cannot be complete.}} In this article
we show that a finite complete set of resource monotones does not
exist for a large class of quantum resource theories. Our results
make only minimal assumptions on the resource monotones: additionally
to Eq.~(\ref{eq:Monotone-1}) we require that the resource monotones
are \emph{continuous} (A resource monotone $R$ is continuous if for all states $\rho$ and $\varepsilon>0$, there exists a $\delta>0$ such that for all $\sigma$ that satisfies $||\rho-\sigma||_1<\delta$, we have $|R(\rho)-R(\sigma)|<\varepsilon$, where $||M||_{1}=\mathrm{Tr}\sqrt{M^{\dagger}M}$
is the trace norm) and \emph{faithful} (A resource monotone $R$ is faithful if $R(\rho)=0$ if and only if $\rho$ is a free state).
Continuity is a very natural assumption which is
fulfilled for most resource monotones studied in the literature --- it guarantees that the value of a resourceful state is robust to perturbations.
In fact, in many cases the monotones fulfill continuity in an even stronger
form, e.g. many entanglement monotones are asymptotically continuous~\citep{HorodeckiPhysRevLett.84.2014,Winter2016}.
Similarly, faithful monotones are often preferred since they detect some value in any non-free state.
We also use the standard assumptions that the set of free states is convex and compact, that the identity operation is free, and that any free state can be obtained from any state via free operations (this is fulfilled by resource theories that ``admit a tensor product structure''~\cite{ChitambarRevModPhys.91.025001}  since for any free state $\sigma$, the following measure-and-prepare channel is also free: $\Lambda(\rho)=\Tr(\rho)\sigma$).
The latter assumption implies that any resource monotone is minimal and constant on all free states -- without loss of generality we set it to zero.
We further say that a state $\rho$ can be converted into a state $\sigma$ via free operations if for any $\varepsilon > 0$ there is a free operation $\Lambda_f$ such that $||\Lambda_{f}(\rho)-\sigma||_{1}<\varepsilon$.
Clearly, the trivial resource theory where all states and all operations are free admits a complete set of continuous monotones.
Therefore we say that a resource theory is non-trivial if there exists a free state and a non-free state.
With these assumptions, we are now ready to prove
the first main result of this article.
\begin{thm}
\label{thm:Entanglement}For any non-trivial resource theory which contains free
pure states, there does not exist a finite complete set of continuous and faithful resource
monotones.
\end{thm}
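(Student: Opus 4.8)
The plan is to assume, for contradiction, that a finite complete set $\{R_1,\dots,R_n\}$ of continuous and faithful monotones exists, and to show that in a sufficiently high-dimensional system the conversion structure carries more degrees of freedom than $n$ continuous numbers can encode. The useful reformulation of completeness is that the map $\vec R:=(R_1,\dots,R_n)$ satisfies $\rho\to\sigma\iff R_i(\rho)\ge R_i(\sigma)$ for all $i$, so $\vec R$ descends to a \emph{continuous order embedding} of the convertibility preorder into $(\mathbb R^n,\ge)$; in particular two states are interconvertible if and only if they carry the same vector $\vec R$. Two preliminaries: first, a non-free \emph{pure} state $|\psi\rangle$ exists, for otherwise every state would be a convex combination of free pure states, hence free, contradicting non-triviality; second, writing $|\phi\rangle$ for the free pure state and $\rho_t:=(1-t)|\phi\rangle\langle\phi|+t|\psi\rangle\langle\psi|$, the set $\{\,t:\rho_t\text{ is free}\,\}$ is a closed interval $[0,t^\ast]$ with $t^\ast<1$ (convexity and closedness of the free set, together with $\rho_1=|\psi\rangle\langle\psi|$ non-free), while $\rho_t\to\rho_{t'}$ for $t\ge t'$ via the free channel $\rho\mapsto\tfrac{t'}{t}\rho+(1-\tfrac{t'}{t})\mathrm{Tr}(\rho)\,|\phi\rangle\langle\phi|$, a convex combination of the identity and a measure-and-prepare channel to a free state.

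Next I would set $m:=n+1$, work in $m$ copies of this system, and consider the product family $\sigma_{\vec t}:=\bigotimes_{j=1}^{m}\rho_{t_j}$ for $\vec t\in[t^\ast,1]^{m}$. Tensoring the dilution channels of the previous step gives $\vec t\ge\vec s\Rightarrow\sigma_{\vec t}\to\sigma_{\vec s}$. Since permuting the $m$ identical copies is free, $\sigma_{\vec t}$ and $\sigma_{\vec s}$ are interconvertible whenever $\vec t$ and $\vec s$ coincide as multisets, so $\vec R\circ\sigma_\bullet$ factors through the space $\Theta:=\{\,t^\ast\le t_1\le\dots\le t_m\le1\,\}$ of sorted tuples, which is a compact subset of $\mathbb R^{m}$ with nonempty interior and hence has topological dimension $m$. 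By invariance of dimension, a continuous \emph{injection} $\Theta\hookrightarrow\mathbb R^{n}$ is impossible for $m=n+1>n$ (a continuous injection out of a compact set is a homeomorphism onto its image, and subsets of $\mathbb R^{n}$ have dimension at most $n$). It therefore suffices to prove that $\vec R\circ\sigma_\bullet$ is injective on $\Theta$, i.e.\ that the multiset $\{t_1,\dots,t_m\}$ is a complete interconvertibility invariant of $\sigma_{\vec t}$.

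Establishing this rigidity is the heart of the argument and the step I expect to be hardest. The natural route is to construct, for each $j$, a continuous monotone $M_j$ whose value on $\sigma_{\vec t}$ is a fixed strictly increasing function of $t_j$ alone: start from a faithful continuous base monotone --- for instance the trace distance $D_F(\xi):=\min_{\eta\in F}\tfrac12||\xi-\eta||_1$ to the free set $F$, which is a monotone by contractivity of channels and $F$-preservation --- and precompose it with the free operation that resets all tensor factors except the $j$-th to a free state; strict monotonicity in $t_j$ then follows by convexity, because along the affine segment $t\mapsto\rho_t$ (tensored with the fixed free state) the distance to the convex free set is a convex function that vanishes precisely on $[0,t^\ast]$ and is therefore strictly increasing on $[t^\ast,1]$. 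Given such $M_1,\dots,M_m$, any conversion $\sigma_{\vec t}\to\sigma_{\vec s}$ would force $M_j(\sigma_{\vec t})\ge M_j(\sigma_{\vec s})$, hence $t_j\ge s_j$ for every $j$, yielding $\sigma_{\vec t}\to\sigma_{\vec s}\iff\vec t^{\downarrow}\ge\vec s^{\downarrow}$ and in particular the required injectivity. The subtle point --- and where the hypotheses genuinely enter --- is that ``monotone $\circ$ free operation'' is again a monotone only when the free operation used is compatible with the tensor-factor structure in the right way; so one must argue with care, or replace $M_j$ by a more intrinsic construction (for example a supremum of $D_F$ over all free operations that output a single copy), to certify that these auxiliary quantities are genuine monotones that still separate the multisets. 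It is precisely in carrying out this rigidity step that continuity, faithfulness, and the availability of a free \emph{pure} state as the reset target are used essentially.
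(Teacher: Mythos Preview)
Your dimension-counting strategy is entirely different from the paper's argument, and the step you yourself flag as hardest --- injectivity of $\vec R\circ\sigma_\bullet$ on $\Theta$ --- has a genuine gap. The proposed $M_j:=D_F\circ\mathrm{reset}_j$ are \emph{not} monotones: take $\xi=|\phi\rangle\langle\phi|\otimes\rho$ with $\rho$ a resource state and let $\Lambda$ be the swap of factors $1$ and $2$ (free by your own permutation assumption); then $M_1(\xi)=0$ while $M_1(\Lambda(\xi))>0$. In general $R\circ\Phi$ with $\Phi$ free is monotone only if every free $\Lambda$ admits a free $\Lambda'$ with $\Phi\circ\Lambda=\Lambda'\circ\Phi$, which fails for the reset maps. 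Your fallback --- a supremum of $D_F$ over free operations outputting a single copy --- \emph{is} a monotone, but it no longer depends on $j$ and cannot separate the coordinates of $\vec t^{\downarrow}$. Without rigidity the injectivity claim is unproved, and nothing in the hypotheses excludes theories with concentration-type free operations on products that render distinct multisets $\vec t^{\downarrow}\neq\vec s^{\downarrow}$ interconvertible. A second, independent problem is that the whole construction presupposes a tensor-product structure (tensor products of free operations are free, factor permutations are free), which is not among the theorem's hypotheses.

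The paper's proof sidesteps all of this with a one-copy rank obstruction. Pick a full-rank non-free state $\rho$ (mix any non-free state with $\openone/d$), so $R_i(\rho)>0$ for every $i$ by faithfulness. Since not all pure states are free (else every state would be free by convexity) and the free set is closed, there is a free pure state $|\phi_f\rangle$ with non-free pure states $|\psi_\varepsilon\rangle$ arbitrarily close to it; by continuity $R_i(\psi_\varepsilon)\to 0$, so for small enough $\varepsilon>0$ one has $R_i(\rho)\ge R_i(\psi_\varepsilon)$ for all of the finitely many $i$. Completeness would then force a free conversion $\rho\to|\psi_\varepsilon\rangle$, but a full-rank state can never reach a pure resource state via free operations (writing $\rho=p_{\min}\psi+(1-p_{\min})\sigma$ for arbitrary $\ket\psi$ shows that if it could, every state --- including free ones --- could), a contradiction. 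This argument uses faithfulness, continuity, finiteness, and the existence of a free pure state exactly once each, and needs no tensor products or dimension theory.
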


\begin{proof}
By contradiction, let there be a complete finite set of continuous resource monotones $\{R_{i}\}$.
Let $\rho$ be a non-free state.
Since the set of free states is compact (and therefore closed), without loss of generality we can assume that $\rho$ is full rank
-- otherwise we take a mixture with the completely mixed state.
Since $R_i$ is faithful, we have $R_i \pqty{\rho} > 0$ for all $i$.
Moreover, we define the pure state
\begin{equation}
\ket{\psi_{\varepsilon}}=\sqrt{1-\varepsilon}\ket{\phi_{f}}+\sqrt{\varepsilon}\ket{\phi_{f}^{\perp}}
\end{equation}
with some free pure state $\ket{\phi_{f}}$ and $0<\varepsilon<1$.
Using again the fact that the set of free states is closed, the state $\ket{\psi_{\varepsilon}}$
can be chosen such that it is not free for all small $\varepsilon > 0$, i.e. we choose $\ket{\phi_f}$ to be on the boundary of the set of free states.
Since $R_i$ is continuous and $R_i (\psi_{\varepsilon = 0}) = 0$, we can choose $\varepsilon_i > 0$ such that $R_{i}(\rho)\geq R_{i}(\psi_{\varepsilon_i})$ for each $i$.
Take $\varepsilon = \min_i \varepsilon_i$, which must be strictly positive since there are a finite number of $R_i$.
Using again the continuity of $R_i$, we have $R_i \pqty{\rho} \geq R_i \pqty{\psi_\varepsilon}$ for all $i$.
If $\{R_{i}\}$ form a complete set of monotones, there must be a free operation converting $\rho$ into $\ket{\psi_{\varepsilon}}$.
Note that $\ket{\psi_{\varepsilon}}$ is a resource state and that $\rho$ is full rank.
It is however not possible to convert a full rank state into a pure resource state via free operations~\cite{FangPhysRevLett.125.060405,PhysRevA.101.062315}, see also Supplemental Material.
We thus arrive at a contradiction, and the proof is complete.
\end{proof}
The above theorem applies to the resource theory of entanglement,
both in bipartite and multipartite setting. Moreover, the resource
theories of coherence, asymmetry, and imaginarity also contain resource-free
pure states, which makes our theorem applicable also to these theories.
The theorem also applies to the resource theory of quantum thermodynamics
in the limit $T\rightarrow0$ if the ground state of the corresponding
Hamiltonian is not degenerate, since the Gibbs state is pure in this
case.

As a particular example, this means that no finite collection of continuous and faithful monotones can characterize the state transitions in PPT theory.
This is despite the fact that for any given two states $\rho, \sigma$, checking whether there exists a PPT operation that achieves the transition $\Lambda(\rho) = \sigma$ is an SDP problem.

\medskip

\textbf{\emph{Surpassing the limitations: discontinuous monotones, infinite sets, and resource catalysis.}} Does the result in Theorem~\ref{thm:Entanglement}
also hold if we take discontinuous monotones into account? As we will
see in the following, there exist resource theories which have a finite
complete set of resource monotones in this case, at least for qubit
systems. This holds for the theories of coherence and imaginarity
in the single-qubit setting. For the theory of coherence, all transformations
for a single qubit are described by the robustness of coherence $C_{R}$
and the $\Delta$-robustness of coherence $C_{\Delta,R}$, which are
given as~\citep{NapoliPhysRevLett.116.150502,PianiPhysRevA.93.042107,ChitambarPhysRevLett.117.030401,ChitambarPhysRevA.94.052336,ChitambarPhysRevA.95.019902,StreltsovPhysRevLett.119.140402}
\begin{align}
C_{R}\left(\rho\right) & =\min_{\tau}\left\{ s\geq0:\frac{\rho+s\tau}{1+s}\in\mathcal{I}\right\} ,\\
C_{\Delta,R}\left(\rho\right) & =\min_{\Delta[\sigma]=\Delta[\rho]}\left\{ s\geq0:\frac{\rho+s\sigma}{1+s}\in\mathcal{I}\right\} ,
\end{align}
where $\mathcal I$ is the set of incoherent states, i.e., states which are diagonal in a reference basis.
Note that in the single-qubit setting both measures can be evaluated
as $C_{R}(\rho)=2|\rho_{0,1}|$ and $C_{\Delta,R}(\rho)=|\rho_{0,1}|/\sqrt{\rho_{0,0}\rho_{1,1}}$~\citep{NapoliPhysRevLett.116.150502,ChitambarPhysRevA.94.052336,ChitambarPhysRevA.95.019902}.
From this we see that $C_{\Delta,R}(\rho)=1$ for all pure states
which have coherence. Since $C_{\Delta,R}(\rho)=0$ for all incoherent
states, this implies that $C_{\Delta,R}$ is not continuous.

For the resource theory of imaginarity we can construct a complete
set of monotones for single-qubit setting in terms of the Bloch coordinates
$(r_{x},r_{y},r_{z})$ of the states~\citep{WuPhysRevLett.126.090401,WuPhysRevA.103.032401}:
\begin{align}
I_{1}\left(\rho\right) & =r_{y}^{2},\\
I_{2}\left(\rho\right) & =\frac{r_{y}^{2}}{1-r_{x}^{2}-r_{z}^{2}}.
\end{align}
As has been shown in~~\citep{WuPhysRevLett.126.090401,WuPhysRevA.103.032401},
$I_{1}$ and $I_{2}$ do not increase under real operations, and fully
describe the transformations in the single-qubit setting. Moreover,
$I_{2}$ is not continuous, since $I_{2}(\rho)=1$ for all pure states
which have imaginarity and $I_{2}(\rho)=0$ on all real states. 

Note that in the resource theory of asymmetry~\citep{Gour_2008,Marvian_2014,LostaglioPhysRevX.5.021001} a complete set of monotones can also be constructed for single-qubit settings (see Supplemental Material for more details).  While the resource theory of quantum thermodynamics~\citep{Lostaglio_2019} in general does not contain resource-free pure states, we demonstrate in Supplemental Material that a complete set of monotones can also be found in quantum thermodynamics in the qubit setting.

Another way to surpass the limitations of Theorem~\ref{thm:Entanglement}
is to allow for an infinite set of resource monotones~\cite{TakagiPhysRevX.9.031053}.
The following is a simple construction of an infinite complete set of resource monotones:
\begin{equation}
R_{\nu}\left(\rho\right)=\inf_{\Lambda_{f}}\left\Vert \Lambda_{f}\left[\nu\right]-\rho\right\Vert _{1},\label{eq:CompleteSet}
\end{equation}
where $\nu$ is a quantum state which at the same time serves as a
parameter of the monotone $R_{\nu}$. To prove that $R_{\nu}$ is a resource monotone,
let $\tilde{\Lambda}_{f}$ be a free operation such that $R_{\nu}(\rho)\geq||\tilde{\Lambda}_{f}[\nu]-\rho||_{1}-\varepsilon$
for some $\varepsilon>0$ (note that such $\tilde{\Lambda}_{f}$ exists
for any $\varepsilon>0$). Then, for any free operation $\Lambda_{f}$
we find 
\begin{align}
R_{\nu}\left(\rho\right) & \geq\left\Vert \tilde{\Lambda}_{f}\left[\nu\right]-\rho\right\Vert _{1}-\varepsilon\geq\left\Vert \Lambda_{f}\circ\tilde{\Lambda}_{f}\left[\nu\right]-\Lambda_{f}\left[\rho\right]\right\Vert _{1}-\varepsilon\nonumber \\
 & \geq R_{\nu}\left(\Lambda_{f}\left[\rho\right]\right)-\varepsilon,
\end{align}
where we have used the fact that the trace norm does not increase under quantum operations. Since the above inequality holds true for any $\varepsilon>0$, we
conclude that $R_{\nu}(\rho)\geq R_{\nu}(\Lambda_{f}[\rho])$, as
claimed. To prove that $R_{\nu}$ form a complete set, consider two
states $\rho$ and $\sigma$ such that $R_{\nu}(\rho)\geq R_{\nu}(\sigma)$
for all states $\nu$. By choosing $\nu=\rho$ and noting that $R_{\rho}(\rho)=0$
it follows that $R_{\rho}(\sigma)=0$. This implies that $\rho$ can
be converted into $\sigma$ via free operations. The above arguments also imply that the set of all resource monotones is complete in any quantum resource theory, i.e., $\rho$ can be converted into $\sigma$ via free operations if and only if $R(\rho) \geq R(\sigma)$ for all resource monotones. We also note that different construction of a complete set of monotones for general quantum resource theories has been given in~\cite{TakagiPhysRevX.9.031053}.

A third way to surpass the limitations of Theorem~\ref{thm:Entanglement}
is to use quantum catalysis \citep{review_catalyst}. A quantum catalyst
is an additional quantum system which is not changed in the overall
procedure~\cite{JonathanPhysRevLett.83.3566}. Recently, significant progress has been achieved in the
study of correlated and approximate catalysis, where a catalyst can
build up correlations with the system, and the procedure is allowed
to have an error which can be made negligibly small \citep{SagawaPhysRevLett.126.150502,Varun_21,Bartosik_PhysRevLett.127.080502,WilmingPhysRevLett.127.260402,RubboliPhysRevLett.129.120506,datta_22}.
In this framework, a system state $\rho^{S}$ can be converted into
$\sigma^{S}$ if for any $\varepsilon>0$ there exists a catalyst
state $\tau^{C}$ and a free operation $\Lambda_{f}$ acting on the
system $S$ and the catalyst $C$ such that \citep{Varun_21,datta_22,RubboliPhysRevLett.129.120506,review_catalyst}
\begin{align}
\left\Vert \Lambda_{f}\left(\rho^{S}\otimes\tau^{C}\right)-\sigma^{S}\otimes\tau^{C}\right\Vert _{1} & \leq\varepsilon,\\
\mathrm{Tr}_{S}\left[\Lambda_{f}\left(\rho^{S}\otimes\tau^{C}\right)\right] & =\tau^{C}.
\end{align}
Remarkably, in the resource theory of coherence catalytic transformations
are completely described by a single quantity, known as the relative
entropy of coherence $C(\rho)=S(\Delta[\rho])-S(\rho)$ with the von
Neumann entropy $S(\rho)=-\mathrm{Tr}[\rho\log_{2}\rho]$. In particular,
it is possible to transform $\rho$ into $\sigma$ via dephasing covariant
operations and approximate catalysis if and only if $C(\rho)\geq C(\sigma)$~\cite{PhysRevLett.128.240501}, see also Supplemental Material. A similar
statement can be made for the resource theory of quantum thermodynamics
based on Gibbs-preserving operations. In this case, catalytic transformations
via Gibbs-preserving operations are fully described by the Helmholtz
free energy~\citep{SagawaPhysRevLett.126.150502}. Equivalently,
a catalytic transformation $\rho\rightarrow\sigma$ is possible in
this setting if and only if~\citep{SagawaPhysRevLett.126.150502}
$S(\rho||\gamma)\geq S(\sigma||\gamma)$ with the Gibbs state $\gamma$
and the quantum relative entropy $S(\rho||\gamma)=\mathrm{Tr}[\rho\log_{2}\rho]-\mathrm{Tr}[\rho\log_{2}\gamma]$.

\medskip

\textbf{\emph{Single complete resource monotone and total order.}}
One of the early problems in entanglement theory is to find a complete set of conditions that characterizes the state transformations.
While it quickly became clear that no finite set of conditions suffices~\citep{GourPhysRevA.72.022323} (except in special cases, e.g. pure states), the argument relies on specific features of entanglement.
Therefore the problem remains open for other resource theories.
In the last part of the article we will investigate the structure of ``simple'' resource theories which
have a single complete resource monotone, i.e., a free transformation
from $\rho$ to $\sigma$ is possible if and only if $R(\rho)\geq R(\sigma)$
for a single monotone $R$.
We will show that such theories are equivalent to total ordering of the states.
We will also provide some partial characterization of such theories, that shows such ``simple'' theories must have a very restricted form, which explains why state transformations in commonly considered resource theories cannot be governed by a single monotone.

In the following, we call a resource theory
\emph{totally ordered} if for any pair of states $\rho$ and $\sigma$
there exists a free transformation in (at least) one direction $\rho\rightarrow\sigma$
or $\sigma\rightarrow\rho$. We further introduce the resource monotone 
\begin{equation}
R(\rho)=\min_{\mu\in\mathcal{F}}\norm{\rho-\mu}_{1}, \label{eq:CompleteMonotone}
\end{equation}
where $\mathcal F$ is the set of free states. It is straightforward to see that $R$ is a monotone in any quantum resource theory.
We are now ready to prove the following theorem. 
\begin{thm} \label{thm:TotalOrderMonotone}
A resource theory has a single complete monotone if and only if the
theory is totally ordered. 
\end{thm}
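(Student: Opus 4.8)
The plan is to prove the two implications separately; the forward direction is immediate, while the converse requires exhibiting an explicit monotone.

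For the ``only if'' direction I would argue as follows. Suppose a single complete monotone $R$ exists, and let $\rho,\sigma$ be arbitrary states. The real numbers $R(\rho)$ and $R(\sigma)$ are comparable, so without loss of generality $R(\rho)\geq R(\sigma)$, and completeness then yields a free transformation $\rho\rightarrow\sigma$. Hence every pair of states is related in at least one direction and the theory is totally ordered.

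For the ``if'' direction I would build a complete monotone from the family in Eq.~(\ref{eq:CompleteSet}). Since the relevant state space is a separable metric space in the trace distance, fix a countable dense subset $\{\nu_{k}\}$ and put $R(\rho)=\sum_{k}2^{-k}R_{\nu_{k}}(\rho)$. The series converges because each $R_{\nu}$ is bounded by $2$, and $R$ is a monotone, being a positive combination of the monotones $R_{\nu_{k}}$ (which are monotones by the computation already given in the text). To prove completeness, take $\rho,\sigma$ with $R(\rho)\geq R(\sigma)$. By total order either $\rho\rightarrow\sigma$, in which case we are done, or $\sigma\rightarrow\rho$; in the latter case monotonicity of each $R_{\nu_{k}}$ gives $R_{\nu_{k}}(\sigma)\geq R_{\nu_{k}}(\rho)$ for all $k$, so from $R(\rho)\geq R(\sigma)$ we obtain $R_{\nu_{k}}(\rho)=R_{\nu_{k}}(\sigma)$ for every $k$. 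Now $\nu\mapsto R_{\nu}(\rho)$ is $1$-Lipschitz in the trace distance, since it is an infimum of the maps $\nu\mapsto\|\Lambda_{f}[\nu]-\rho\|_{1}$, each of which is $1$-Lipschitz by contractivity of the trace norm under channels. Hence, choosing a subsequence $\nu_{k_{j}}\rightarrow\rho$, we get $R_{\rho}(\sigma)=\lim_{j}R_{\nu_{k_{j}}}(\sigma)=\lim_{j}R_{\nu_{k_{j}}}(\rho)=R_{\rho}(\rho)=0$. By definition $R_{\rho}(\sigma)=0$ means that for every $\varepsilon>0$ there is a free operation taking $\rho$ within $\varepsilon$ of $\sigma$, i.e. $\rho\rightarrow\sigma$ in the approximate sense used here, which completes the proof that $R$ is complete.

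The main obstacle is the converse direction, and specifically the step that promotes ``$R(\rho)=R(\sigma)$'' to ``$\rho$ and $\sigma$ are interconvertible''. Equality of $R$ only constrains the countably many $R_{\nu_{k}}$, whereas whether $\rho\rightarrow\sigma$ is exactly detected by the single parameter choice $\nu=\rho$, which will generically not belong to the dense set; the joint Lipschitz continuity of $(\nu,\rho)\mapsto R_{\nu}(\rho)$ is precisely what transfers the information from the dense parameters to $\nu=\rho$. The remaining care is bookkeeping with the approximate-transformation convention, in particular checking that $\rho\rightarrow\sigma$ implies $R(\rho)\geq R(\sigma)$ even when the transformation is only approximate, which again follows from the Lipschitz continuity of each $R_{\nu}$ in its state argument.
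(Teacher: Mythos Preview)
Your proof is correct, but it takes a genuinely different route from the paper's. The paper does not build a monotone out of the family $R_{\nu}$; instead it shows directly that the trace distance to the free set, $R(\rho)=\min_{\mu\in\mathcal{F}}\|\rho-\mu\|_{1}$, is already complete in a totally ordered theory. The only nontrivial case is $R(\rho)=R(\sigma)>0$, and there the paper perturbs $\sigma$ toward its closest free state $\mu_{f}$, setting $\sigma_{\varepsilon}=(1-\varepsilon)\sigma+\varepsilon\mu_{f}$, so that $R(\sigma_{\varepsilon})\leq(1-\varepsilon)R(\sigma)<R(\rho)$; total order then forces $\rho\rightarrow\sigma_{\varepsilon}$, and letting $\varepsilon\downarrow0$ gives $\rho\rightarrow\sigma$ in the approximate sense.

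Comparing the two: the paper's argument is shorter and yields a single, natural, continuous and faithful monotone, but it leans on the convexity and compactness of $\mathcal{F}$ (so the minimum is attained and the perturbation stays inside the state space). Your construction is more abstract---a weighted sum of $R_{\nu_{k}}$ over a countable dense set, with the Lipschitz continuity of $\nu\mapsto R_{\nu}(\cdot)$ doing the work of passing from the dense family to $\nu=\rho$---and it never touches the free \emph{states} at all, only the free operations and the metric structure of the state space. The price is a less explicit monotone and the need to invoke separability; the gain is that the argument would survive in settings where the distance-to-free-set quantity is awkward or unavailable.
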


The proof idea is to use the fact that $R\pqty{\pqty{1-\varepsilon} \sigma + \varepsilon \mu}$ is a strictly non-increasing function of $\varepsilon$ for any free state $\mu$.
We refer to Supplemental Material for the full proof. This shows that the existence of a single monotone that is complete
is equivalent to a total ordering of the set of states by the free
transformations. We will now prove some additional features of totally ordered resource theories.

\begin{thm}\label{totally_ordered}
Any totally ordered quantum resource theory allows for free transformations
between any two pure states $\ket{\psi}\rightarrow\ket{\phi}$. 
\end{thm}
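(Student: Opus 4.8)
The plan is to leverage two facts already in play: the hypothesis of total ordering, and the impossibility --- invoked in the proof of Theorem~\ref{thm:Entanglement} --- of converting a full-rank state into a pure resource state by free operations. First I would dispose of the degenerate case: if the theory is trivial, every operation is free, so a unitary carrying $\ket{\psi}$ to $\ket{\phi}$ is itself a free operation and there is nothing to prove. From now on I assume the theory is non-trivial.

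The crucial preparatory step is to show that a non-trivial totally ordered theory has no free pure states. For this I would observe that the monotone $R$ of Eq.~(\ref{eq:CompleteMonotone}) is $1$-Lipschitz in the trace norm, hence continuous, and that it is faithful because $\mathcal F$ is compact, hence closed; moreover, the proof of Theorem~\ref{thm:TotalOrderMonotone} establishes that in a totally ordered theory this same $R$ is complete (this is precisely what the strict monotonicity of $\varepsilon\mapsto R((1-\varepsilon)\sigma+\varepsilon\mu)$ is used for). Thus $\{R\}$ would be a finite complete set of continuous and faithful resource monotones, and if the theory also contained a free pure state this would contradict Theorem~\ref{thm:Entanglement}. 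Hence every pure state must be a resource state. I expect this to be the main obstacle: not any single estimate, but noticing that the distance-based monotone in Eq.~(\ref{eq:CompleteMonotone}) is simultaneously continuous, faithful, and --- in the totally ordered case --- complete, which is exactly what lets Theorem~\ref{thm:Entanglement} do the work.

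The rest is routine. I would fix pure states $\ket{\psi},\ket{\phi}$ and set $\rho_\varepsilon=(1-\varepsilon)\ketbra{\phi}{\phi}+\varepsilon\,\tau$, where $\tau$ is the maximally mixed state and $\varepsilon\in(0,1]$, so that $\rho_\varepsilon$ is full rank. Since $\ket{\psi}$ is a pure resource state and $\rho_\varepsilon$ has full rank, the transformation $\rho_\varepsilon\to\ket{\psi}$ is impossible~\cite{FangPhysRevLett.125.060405,PhysRevA.101.062315}; total ordering therefore forces $\ket{\psi}\to\rho_\varepsilon$ for every $\varepsilon\in(0,1]$. Finally $\norm{\rho_\varepsilon-\ketbra{\phi}{\phi}}_1=\varepsilon\norm{\tau-\ketbra{\phi}{\phi}}_1\to 0$ as $\varepsilon\to 0$, so for any $\delta>0$ I pick $\varepsilon$ with $\norm{\rho_\varepsilon-\ketbra{\phi}{\phi}}_1<\delta/2$ and then a free operation $\Lambda_f$ with $\norm{\Lambda_f(\ketbra{\psi}{\psi})-\rho_\varepsilon}_1<\delta/2$; the triangle inequality gives $\norm{\Lambda_f(\ketbra{\psi}{\psi})-\ketbra{\phi}{\phi}}_1<\delta$. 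As $\delta>0$ was arbitrary, $\ket{\psi}\to\ket{\phi}$, which completes the argument.
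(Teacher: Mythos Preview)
Your proof is correct, and the overall strategy shares the paper's two key ingredients --- the trace-distance monotone $R$ of Eq.~(\ref{eq:CompleteMonotone}) together with its completeness from Theorem~\ref{thm:TotalOrderMonotone}, and the impossibility of converting a full-rank state into a pure resource state --- but you deploy them differently. The paper argues by contradiction that $R(\psi)=R(\phi)$ for all pure states: assuming $R(\psi)>R(\phi)>0$, it mixes the \emph{source} $\psi$ with the maximally mixed state to obtain a full-rank $\rho_\varepsilon$ still satisfying $R(\rho_\varepsilon)>R(\phi)$, so that completeness of $R$ would force $\rho_\varepsilon\to\phi$, contradicting the full-rank obstruction; equality of $R$ on all pure states then gives interconvertibility. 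You instead first invoke Theorem~\ref{thm:Entanglement} to exclude free pure states outright, then mix the \emph{target} $\phi$ with the maximally mixed state and use total ordering directly: since $\rho_\varepsilon\to\psi$ is blocked, $\psi\to\rho_\varepsilon$ must hold, and $\varepsilon\to 0$ finishes. Your route handles the edge case of a potentially free pure target cleanly (the paper's written argument tacitly assumes $R(\phi)>0$), at the cost of appealing to Theorem~\ref{thm:Entanglement}; the paper's route is more self-contained. One minor imprecision: in the trivial case it is not that ``every operation is free'' by definition, but rather that all states are free, and the standing assumption that any free state is reachable from any state by free operations already yields $\psi\to\phi$.
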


See Supplemental Material for the full proof. It is important to note that Theorem~\ref{totally_ordered} implies 
\begin{equation}
R(\psi)=R(\phi)\label{eq:TotalOrderPure}
\end{equation}
for any two pure states $\ket{\psi}$ and $\ket{\phi}$.

We will now fully characterize all totally ordered resource theories for $d=2$.
We will start by characterizing the set of free states, using
again the monotone $R$ in Eq.~(\ref{eq:CompleteMonotone}). Note for two single-qubit states
$\rho$ and $\sigma$ with Bloch vectors $\boldsymbol{r}$ and $\boldsymbol{s}$
it holds $||\rho-\sigma||_{1}=|\boldsymbol{r}-\boldsymbol{s}|$. Since
all pure states are equally far away from the set of free states due
to Eq.~(\ref{eq:TotalOrderPure}), it must be that the set of free states is a ball around
the maximally mixed state. Denoting the radius of this ball by $t$
we can characterize the set of free states as follows:
\begin{equation}
\mathcal{F}_t=\left\{ \sigma:\left\Vert \sigma-\frac{\openone}{2}\right\Vert _{1}\leq t\right\} .\label{eq:FtotallyOrdered}
\end{equation}
with $t \in [0,1]$. For any given $t$ we can now evaluate the resource monotone $R$
for any state $\rho$:
\begin{equation}
R(\rho)=\max\{|\boldsymbol r|-t,0\}.
\end{equation}
Thus, in a totally ordered resource
theory for a single qubit all state transformations are determined
by the length of the Bloch vector. For any two resource states $\rho$
and $\sigma$ (with Bloch vectors $\boldsymbol{r}$ and $\boldsymbol{s}$)
a free transformation $\rho\rightarrow\sigma$ is possible if and
only if $|\boldsymbol{r}|\geq|\boldsymbol{s}|$. Moreover, a transformation
$\rho\rightarrow\sigma$ is always possible whenever $|\boldsymbol{s}|\leq t$,
since $\sigma$ is a free state in this case. 

An example for a totally ordered resource theory in the single-qubit
setting is the resource theory of purity~\cite{HorodeckiPhysRevA.67.062104,GOUR20151}, which corresponds to the
case $t=0$. We will now show that a totally ordered resource theory exists for any $t \in [0,1]$. For a given $t$, we define the set of free operations to be all unital operations, i.e., all operations with the property $\Lambda[\openone/2] = \openone/2$. Additionally, all fixed-output operations such that $\Lambda[\rho] = \sigma$ with $\sigma \in \mathcal F_t$ are considered free. Noting that via unital operations it is possible to transform a qubit state $\rho$ into another qubit state $\sigma$ if and only if $|\boldsymbol r| \geq |\boldsymbol s|$~\cite{GOUR20151}, we see that the free states and operations defined in this way give rise to a totally ordered resource theory, with $\mathcal F_t$ being the set of free states.
Note that the key property enabling this construction is that unital operations induce a total order.
However since this property does not hold for $d \geq 3$, we cannot generalize the construction to higher dimensional systems.

\medskip
\textbf{\emph{Conclusions.}} We have investigated the possibility to have a complete set of monotones in general quantum resource theories. Using only minimal assumptions, such as monotonicity and continuity, we have proven that a complete finite set of monotones does not exist, if a resource theory contains free pure states. This result is applicable to the theory of entanglement in bipartite and multipartite settings, and also to the theories of coherence, imaginarity, and asymmetry. It is however possible to find complete sets of monotones by either allowing discontinuity, or considering infinite sets, and we gave examples for such complete sets in various resource theories.

We have further considered resource theories where the state transformations are governed by a single monotone.
We proved that any such theory must be totally ordered, where any pair of states admit a free transformation in (at least) one direction.
We provided a partial characterization of any such theory: any totally ordered resource theory must allow for free transformations between all pure states, and provided a full characterization of state transformations for all totally ordered resource theories for a single qubit.
It remains an open question whether there exist totally ordered resource theories for $d \geq 3$.
Nevertheless, this shows the severe restrictions one imposes when we assume that transformations are governed by a single monotone. Another open problem concerns the extension of our results to the resource theories of quantum channels, where -- instead of states -- transformations between quantum channels are considered~\cite{PhysRevLett.122.190405}. It is not clear at this moment how the results presented in this article extend to these resource theories.

\textbf{\emph{Acknowledgements.}} We acknowledge discussion with Ludovico Lami, Bartosz Regula, Henrik Wilming, and Andreas Winter. This work was supported by the ``Quantum Optical Technologies'' project, carried out within the International Research Agendas programme of the Foundation for Polish Science co-financed by the European Union under the European Regional Development Fund, the ``Quantum Coherence and Entanglement for Quantum Technology'' project, carried out within the First Team programme of the Foundation for Polish Science co-financed by the European Union under the European Regional Development Fund, and the National Science Centre, Poland, within the QuantERA II Programme (No 2021/03/Y/ST2/00178, acronym ExTRaQT) that has received funding from the European Union's Horizon 2020 research and innovation programme under Grant Agreement No 101017733.

\bibliography{literature}

\clearpage
\section*{Supplemental Material}

\subsection*{Complete set of monotones for single-qubit state transformations in the resource theory of asymmetry}

In the resource theory of asymmetry \citep{Gour_2008,Marvian_2014,LostaglioPhysRevX.5.021001},
we can also construct a complete set of monotones for single-qubit
state transformations. For a given initial qubit state $\rho$, the
achievable set of qubit states $\{\sigma\}$ can be found using the
following condition \citep{LostaglioPhysRevX.5.021001}: 
\begin{equation}
|\sigma_{0,1}|\leq|\rho_{0,1}|\sqrt{\chi},\label{ti_lostaglio}
\end{equation}
where $\chi=\min\{\sigma_{0,0}/\rho_{0,0},(1-\sigma_{0,0})/(1-\rho_{0,0})\}$.
Considering Eq. (\ref{ti_lostaglio}), we can easily construct
the following complete set of monotones: 
\begin{eqnarray}
A_{1}(\rho) & = & \frac{|\rho_{0,1}|}{\sqrt{\rho_{0,0}}},\label{TI1}\\
A_{2}(\rho) & = & \frac{|\rho_{0,1}|}{\sqrt{1-\rho_{0,0}}}.\label{TI2}
\end{eqnarray}
It is easy to check that the monotones are not continuous. To see
this, consider a pure state with $\rho_{0,1}=\varepsilon\sqrt{1-\varepsilon^{2}}$
and $\rho_{0,0}=\varepsilon^{2}$, where $\varepsilon>0$. Note that $A_{1}(\rho)$ converges to one in the limit $\varepsilon \rightarrow 0$, whereas it is
zero for the non-resourceful states. In a similar way, by taking into account a pure state with $\rho_{0,1}=\varepsilon\sqrt{1-\varepsilon^{2}}$
and $\rho_{1,1}=\varepsilon^{2}$, we can verify that $A_2(\rho)$ is discontinuous.

\subsection*{Complete set of monotones for thermal operations for a single qubit}
Notably, in qubit scenarios, the resource theory of thermodynamics can be seen as a combination of the resource theory of asymmetry and the resource theory of Gibbs-preserving operations, i.e., a resource theory where all operations which preserve the Gibbs state are free \citep{CwiklinskiPhysRevLett.115.210403}. The complete criteria for state transformations under Gibbs-preserving
operations are known \citep{Buscemi_PhysRevA.95.012110}, i.e., a
state $\rho$ can be transformed into $\sigma$ under Gibbs-preserving
operation if and only if the following inequalities hold:
\begin{eqnarray}
 &  & D_{\max}(\rho||\gamma)\geq D_{\max}(\sigma||\gamma),\\
 &  & D_{\max}(\gamma||\rho)\geq D_{\max}(\gamma||\sigma),\\
 &  & D_{\min}(\gamma||\rho)\geq D_{\min}(\gamma||\sigma).
\end{eqnarray}
Here, $\gamma$ represents Gibbs state, $D_{\max}(\rho_{1}||\rho_{2})=\log\min\{\lambda:\rho_{1}\leq\lambda\rho_{2}\}$
and $D_{\min}(\rho_{1}||\rho_{2})=-\log\Tr(\Pi_{\rho_{1}}\rho_{2})$
are the max- and min-relative entropies respectively \citep{NDatta_2009}, and $\Pi_{\rho_1}$ denotes the projector onto the support of $\rho_1$. These three monotones, when combined with the monotones stated in Eqs. (9) and (10) of the main text, provide a complete set of monotones for thermal operations in the qubit scenario \citep{CwiklinskiPhysRevLett.115.210403}. 

\subsection*{Approximate catalysis and the resource theory of coherence}
Here, we will prove that in the resource theory of coherence based
on dephasing covariant incoherent operations (DIO) an approximate catalytic transformation
from $\rho$ into $\sigma$ is possible if and only if 
\begin{equation}
C(\rho)\geq C(\sigma),\label{eq:CoherenceCatalysis}
\end{equation}
where $C(\rho)=S(\Delta[\rho])-S(\rho)$ is the relative entropy of
coherence.

The proof follows similar lines of reasoning as the proofs for the
resource theories of entanglement and thermodynamics~\cite{SagawaPhysRevLett.126.150502,Varun_21,datta_22}, and a discussion for general quantum resource theories has also been presented in~\cite{PhysRevLett.128.240501}. In
particular, we first show that a transformation with approximate catalysis
is possible whenever $\rho$ can be converted into $\sigma$ via DIO in the
asymptotic setting with rate at least one. An asymptotic
transformation with rate at least one is possible if and only if for
any $\varepsilon>0$ and any $\delta>0$ there exist integers $n$
and $m$ with $n>m$ and a dephasing covariant operation $\Lambda$
such that \begin{subequations}\label{eq:asymptotic}
\begin{align}
\left\Vert \Lambda\left[\rho^{\otimes n}\right]-\sigma^{\otimes m}\otimes\ket{0}\!\bra{0}^{\otimes n-m}\right\Vert _{1} & \leq\varepsilon,\\
\frac{m}{n}+\delta & \geq1.
\end{align}
\end{subequations} We define the state $\Gamma=\Lambda\left[\rho^{\otimes n}\right]$,
which is a quantum state on $n$ copies of the system $S_{1}\otimes S_{2}\otimes\cdots\otimes S_{n}$.
We further define $\Gamma_{i}$ to be the reduced state of $\Gamma$
on $S_{1}\otimes S_{2}\otimes\cdots\otimes S_{i}$, and we set
$\Gamma_{0}=1$. Consider now a catalyst state of the form 
\begin{equation}
\tau=\frac{1}{n}\sum_{k=1}^{n}\rho^{\otimes k-1}\otimes\Gamma_{n-k}\otimes\ket{k}\!\bra{k}.\label{eq:tau}
\end{equation}
The state of the catalyst acts on a Hilbert space of $S^{\otimes{n-1}}\otimes K$,
where $n$ is chosen such that Eqs.~(\ref{eq:asymptotic}) are fulfilled.
Moreover, $K$ is a system of dimension $n$, with incoherent states
$\{\ket{k}\}$. The initial system $S$ will be denoted as $S_{1}$,
and $n-1$ copies of $S$ which are part of the catalyst are denoted
by $S_{2},\ldots,S_{n}$. Thus, the system $C$ of the catalyst is
composed of $C=S_{2}\ldots S_{n}K$. 

Consider now the following 3-step procedure acting on the system and the
catalyst:
\begin{enumerate}
\item A von Neumann measurement is applied on the register $K$ in the basis
$\{\ket{k}\}$. If the measurement outcome is $n$, the DIO $\Lambda$
defined in Eqs.~(\ref{eq:asymptotic}) is applied onto the systems
$S_{1}\otimes S_{2}\otimes\cdots\otimes S_{n}$. In this way, a state
$\nu=\nu^{S_{1}\otimes S_{2}\otimes\cdots\otimes S_{n}K}$ on $S_{1}\otimes S_{2}\otimes\cdots\otimes S_{n}K$
is transformed as follows:
\begin{align}
\nu & \rightarrow\sum_{k=1}^{n-1}p_{k}\nu_{k}^{S_{1}\otimes S_{2}\otimes\cdots\otimes S_{n}}\otimes\ket{k}\!\bra{k}^{K}\\
 & +p_{n}\Lambda\left[\nu_{n}^{S_{1}\otimes S_{2}\otimes\cdots\otimes S_{n}}\right]\otimes\ket{n}\!\bra{n}^{K},\nonumber 
\end{align}
where we defined the probability
\begin{align}
p_{k} =\mathrm{Tr}\left[\nu\openone^{S_{1}\otimes S_{2}\otimes\cdots\otimes S_{n}}\otimes\ket{k}\!\bra{k}^{K}\right],
\end{align}
and for $p_k > 0$ the states $\nu_{k}^{S_{1}\otimes S_{2}\otimes\cdots\otimes S_{n}}$ are defined as 
\begin{equation}
    \nu_{k}^{S_{1}\otimes S_{2}\otimes\cdots\otimes S_{n}} =\frac{1}{p_{k}}\mathrm{Tr}_{K}\left[\nu\openone^{S_{1}\otimes S_{2}\otimes\cdots\otimes S_{n}}\otimes\ket{k}\!\bra{k}^{K}\right].
\end{equation}
Recalling that $\Lambda$ is dephasing covariant, it is straightforward
to verify that the overall transformation on $S_{1}\otimes S_{2}\otimes\cdots\otimes S_{n}K$
is also dephasing covariant.
\item A dephasing covariant unitary is applied on the register $K$ such
that $\ket{n}\rightarrow\ket{1}$ and $\ket{i}\rightarrow\ket{i+1}$. 
\item A SWAP unitary is applied on the subsystems, which shifts $S_{i}\rightarrow S_{i+1}$
and $S_{n}\rightarrow S_{1}$.
\end{enumerate}
Note that the overall transformation described in the steps 1-3
is dephasing covariant. 

We will now analyze how this transformation acts on a total system-catalyst
state, which initially has the form
\begin{equation}
\rho\otimes\tau=\frac{1}{n}\sum_{k=1}^{n}\rho^{\otimes k}\otimes\Gamma_{n-k}\otimes\ket{k}\!\bra{k}.
\end{equation}
Applying 1. step of the protocol we obtain
\begin{equation}
\mu_{1}=\frac{1}{n}\sum_{k=1}^{n-1}\rho^{\otimes k}\otimes\Gamma_{n-k}\otimes\ket{k}\!\bra{k}+\frac{1}{n}\Gamma\otimes\ket{n}\!\bra{n}.
\end{equation}
In the 2. step of the protocol, $\mu_{1}$ is transformed into 
\begin{equation}
\mu_{2}=\frac{1}{n}\sum_{k=1}^{n}\rho^{\otimes k-1}\otimes\Gamma_{n+1-k}\otimes\ket{k}\!\bra{k}.\label{eq:Mu2}
\end{equation}
It is straightforward to check that tracing out $S_{n}$ from $\mu_{2}$
gives $\tau$, which is the initial state of the catalyst, see Eq.~(\ref{eq:tau}).
In 3. step of the protocol, the state $\mu_{2}$ is transformed into
the final state $\mu^{SC}$ such that $\mathrm{Tr}_{S}[\mu^{SC}]=\tau^C$. 

We will now show that for any $\varepsilon>0$ and any $\delta>0$
the protocol can be performed such that
\begin{equation}
\left\Vert \mu^{SC}-\sigma^{S}\otimes\tau^{C}\right\Vert _{1}<2\left(\varepsilon+\delta\right).\label{eq:DecouplingMain}
\end{equation}
Note that $\mu^{SC}$ is equivalent to the state $\mu_{2}$ in Eq.~(\ref{eq:Mu2})
up to a cyclic SWAP. This implies that 
\begin{equation}
\left\Vert \mu^{SC}-\sigma^{S}\otimes\tau^{C}\right\Vert _{1}=\left\Vert \mu_{2}-\gamma\right\Vert _{1},
\end{equation}
 where the state $\gamma$ is defined as 
\begin{equation}
\gamma=\frac{1}{n}\sum_{k=1}^{n}\rho^{\otimes k-1}\otimes\Gamma_{n-k}\otimes\sigma\otimes\ket{k}\!\bra{k}.
\end{equation}
Now, we obtain 
\begin{align}
\Vert\mu_{2}-\gamma\Vert_{1} & =\frac{1}{n}\sum_{k=1}^{n}\Vert\Gamma_{n+1-k}-\Gamma_{n-k}\otimes\sigma\Vert_{1}\\
 & =\frac{1}{n}\sum_{k=1}^{n-m}\Vert\Gamma_{n+1-k}-\Gamma_{n-k}\otimes\sigma\Vert_{1}\nonumber \\
 & +\frac{1}{n}\sum_{k=n-m+1}^{n}\Vert\Gamma_{n+1-k}-\Gamma_{n-k}\otimes\sigma\Vert_{1}\nonumber \\
 & \leq2\delta+\frac{1}{n}\sum_{l=1}^{m}\Vert\Gamma_{l}-\Gamma_{l-1}\otimes\sigma\Vert_{1}\nonumber \\
 & \leq2\delta+\frac{1}{n}\sum_{l=1}^{m}\Vert\Gamma_{l}-\sigma^{\otimes l}\Vert_{1}\nonumber \\
 & +\frac{1}{n}\sum_{l=1}^{m}\Vert\Gamma_{l-1}-\sigma^{\otimes l-1}\Vert_{1}\nonumber \\
 & \leq2(\delta+\frac{m}{n}\varepsilon)\leq2(\delta+\varepsilon).\nonumber 
\end{align}
In the first inequality we used Eqs.~(\ref{eq:asymptotic}) together
with the inequality $||\rho-\sigma||_{1}\leq2$ for any $\rho$ and $\sigma$. In the second inequality
we used the triangle inequality. The third inequality follows again
from Eqs. (\ref{eq:asymptotic}). The above arguments prove that $\rho$ can be converted into $\sigma$ via DIO with approximate catalysis whenever an asymptotic conversion via DIO is possible with rate at least one.

As follows from results in~\cite{ChitambarPhysRevA.97.050301}, it is possible to convert $\rho$
into $\sigma$ via asymptotic DIO with rate at least one whenever
Eq.~(\ref{eq:CoherenceCatalysis}) is fulfilled. This implies that Eq.~(\ref{eq:CoherenceCatalysis}) also guarantees that 
the transformation $\rho\rightarrow\sigma$ is possible via DIO with
approximate catalysis.

To show that a transformation is not possible when Eq.~(\ref{eq:CoherenceCatalysis})
is violated, we will now prove that the relative entropy of coherence cannot increase under DIO with approximate catalysis. In particular, if $\rho$ can be converted into
$\nu$ via DIO with approximate catalysis, it must hold that 
\begin{equation}
C(\rho)\geq C(\nu).
\end{equation}
The proof follows very similar lines of reasoning as the proof for
bipartite pure states in entanglement theory~\cite{Varun_21}, we present
it below for completeness.

Note that the relative entropy of coherence in bipartite systems fulfills~\cite{Xi2015}
\begin{equation}
C(\rho^{AB})\geq C(\rho^{A})+C(\rho^{B})
\end{equation}
with equality when $\rho^{AB}=\rho^{A}\otimes\rho^{B}$. Assume now
that for any $\varepsilon>0$ there exists a catalyst state $\tau^{C}$
and a DIO $\Lambda$ acting on $SC$ such that the final state $\sigma^{SC}=\Lambda(\rho^{S}\otimes\tau^{C})$
has the properties 
\begin{align}
\left\Vert \mathrm{Tr}_{C}\left[\sigma^{SC}\right]-\nu^{S}\right\Vert _{1} & <\varepsilon,\\
\mathrm{Tr}_{S}\left[\sigma^{SC}\right] & =\tau^{C}.
\end{align}
Using the properties of the relative entropy of coherence we obtain
\begin{equation}
C\left(\sigma^{SC}\right)\leq C\left(\rho^{S}\right)+C\left(\tau^{C}\right)\label{1ineq}
\end{equation}
and also 
\begin{equation}
C\left(\sigma^{SC}\right)\geq C\left(\mathrm{Tr}_{C}\left[\sigma^{SC}\right]\right)+C\left(\tau^{C}\right).\label{2ineq}
\end{equation}
From Eqs. (\ref{1ineq}) and (\ref{2ineq}) it follows 
\begin{equation}
C\left(\rho^{S}\right)\geq C\left(\mathrm{Tr}_{C}\left[\sigma^{SC}\right]\right).
\end{equation}
If $||\mathrm{Tr}_{C}[\sigma^{SC}]-\nu^{S}||_{1}$ can
be made arbitrarily small, then by continuity of the relative entropy
of coherence~\cite{PhysRevLett.116.120404} we get $C(\rho^{S})\geq C(\nu^{S})$, and
the proof is complete. 

\subsection*{Full rank states cannot be converted into pure resource states}\label{full_rank_state}
Here we will prove that it is not possible to convert a full rank
state $\rho$ into a pure resource state $\ket{\phi}$ via free operations, a proof of this can also be found in~\cite{FangPhysRevLett.125.060405,PhysRevA.101.062315}.
By contradiction, suppose that such a conversion is possible, i.e.,
for any $\varepsilon>0$ there is a free operation $\Lambda_{f}$
such that 
\begin{equation}
\braket{\phi|\Lambda_{f}\left[\rho\right]|\phi}>1-\varepsilon.\label{eq:FullRank}
\end{equation}
Recalling that $\rho$ has full rank, for any
$\ket{\psi}\in\mathcal{H}_{d}$ there exists some state $\sigma$
such that 
\begin{equation}
\rho=p_{\min}\psi+(1-p_{\min})\sigma,
\end{equation}
where $p_{\min}$ is the smallest eigenvalue of $\rho$, see also Lemma~\ref{lem:LemmaDecomposition} below.
With this, we obtain 
\begin{align}
\braket{\phi|\Lambda_{f}\left[\rho\right]|\phi} & =p_{\min}\braket{\phi|\Lambda_{f}\left[\psi\right]|\phi}+(1-p_{\min})\braket{\phi|\Lambda_f\left[\sigma\right]|\phi}\nonumber \\
 & \leq1-p_{\min}\left(1-\braket{\phi|\Lambda_{f}\left[\psi\right]|\phi}\right).
\end{align}
Together with Eq.~(\ref{eq:FullRank}) we obtain
\begin{equation}
\braket{\phi|\Lambda_{f}\left[\psi\right]|\phi}>1-\frac{\varepsilon}{p_{\min}}.
\end{equation}
Since we can choose $\varepsilon>0$ arbitrarily small, we conclude
that for any pure state $\ket{\psi}\in\mathcal{H}_{d}$ there is a
free operation transforming $\ket{\psi}$ into $\ket{\phi}$. By linearity,
this extends also to any mixed state on $\mathcal{H}_{d}$, i.e.,
any mixed state can be transformed into $\ket{\phi}$ via free operations.
Noting that this also applies to any free state $\sigma_{f}$ we arrive
at a contradiction and the proof is complete.

We will now provide a proof of a statement which has been used above in the proof.
\begin{lem} \label{lem:LemmaDecomposition}
Let $p_{\min}$ be the smallest eigenvalue of $\rho$. For any $\ket{\psi}\in\mathcal{H}_{d}$
there exists some state $\sigma$ such that
\begin{equation}
\rho=p_{\min}\psi+(1-p_{\min})\sigma.\label{eq:LemmaDecomposition}
\end{equation}
\end{lem}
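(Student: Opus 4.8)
The plan is to exhibit the state $\sigma$ explicitly and verify it has all the required properties. Concretely, I would define
\begin{equation}
\sigma:=\frac{\rho-p_{\min}\ket{\psi}\!\bra{\psi}}{1-p_{\min}},
\end{equation}
which makes sense because $p_{\min}\leq 1/d<1$ for any $d\geq 2$ (and the statement is trivial for $d=1$), so the denominator is strictly positive. With this definition, Eq.~(\ref{eq:LemmaDecomposition}) holds by construction, so the only thing left to check is that $\sigma$ is a legitimate density operator: Hermitian, unit trace, and positive semidefinite.

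Hermiticity is immediate since $\rho$ and $\ket{\psi}\!\bra{\psi}$ are Hermitian. For the trace, $\mathrm{Tr}\,\sigma=(\mathrm{Tr}\,\rho-p_{\min})/(1-p_{\min})=(1-p_{\min})/(1-p_{\min})=1$. The substantive step is positivity, which reduces to showing the operator inequality
\begin{equation}
\rho\geq p_{\min}\ket{\psi}\!\bra{\psi}.
\end{equation}
Here I would use two facts: first, since $p_{\min}$ is the smallest eigenvalue of $\rho$, we have $\rho\geq p_{\min}\openone$; second, for a normalized vector $\ket{\psi}$ one has $\ket{\psi}\!\bra{\psi}\leq\openone$, which follows from Cauchy--Schwarz, $|\!\braket{\chi|\psi}\!|^{2}\leq\braket{\chi|\chi}\braket{\psi|\psi}=\braket{\chi|\chi}$ for all $\ket{\chi}$. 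Combining these, for any $\ket{\chi}$,
\begin{equation}
\braket{\chi|\rho|\chi}\geq p_{\min}\braket{\chi|\chi}\geq p_{\min}|\!\braket{\chi|\psi}\!|^{2}=p_{\min}\braket{\chi|\psi}\!\braket{\psi|\chi},
\end{equation}
which gives the desired operator inequality, hence $\rho-p_{\min}\ket{\psi}\!\bra{\psi}\geq 0$ and therefore $\sigma\geq 0$.

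I do not expect a real obstacle here: the argument is essentially the observation that $p_{\min}\openone\leq\rho$ dominates the rank-one contribution $p_{\min}\ket{\psi}\!\bra{\psi}$ uniformly. The only minor point worth stating carefully is the edge case $p_{\min}=1$ (which forces $d=1$ and $\rho=\openone$, making the claim trivial) so that the division defining $\sigma$ is always well posed; everything else is routine linear algebra.
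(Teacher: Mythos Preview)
Your proof is correct and follows essentially the same approach as the paper: define $\sigma=(\rho-p_{\min}\ket{\psi}\!\bra{\psi})/(1-p_{\min})$, note that $p_{\min}\leq 1/d<1$ ensures the denominator is nonzero, and verify positivity via $\braket{\chi|\rho|\chi}\geq p_{\min}\braket{\chi|\chi}\geq p_{\min}|\!\braket{\chi|\psi}\!|^{2}$. The paper's version is slightly terser (it omits the explicit Hermiticity and trace checks and the intermediate Cauchy--Schwarz step), but the argument is the same.
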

\begin{proof}
Noting that $\braket{\phi|\rho|\phi}\geq p_{\min}$ is true for any
$\ket{\phi}\in\mathcal{H}_{d}$, it must be that 
\begin{equation}
\rho-p_{\min}\psi\geq0
\end{equation}
for any $\ket{\psi}\in\mathcal{H}_{d}$. Since $p_{\min}\leq1/d<1$,
we can define the state
\begin{equation}
\sigma=\frac{\rho-p_{\min}\psi}{1-p_{\min}},
\end{equation}
such that Eq.~(\ref{eq:LemmaDecomposition}) is fulfilled. 
\end{proof}

\subsection*{Proof of Theorem 2}

If there is a single monotone that is complete, then for any two states
$\rho,\sigma$ either we have $\rho\rightarrow\sigma$ possible by
free operations or $\sigma\rightarrow\rho$. This means that the ordering
on the set of states induced by free transformation is a total order.

To prove the converse, assume that the free transformations induce
a total order on the set of states. We will show that the monotone
defined in Eq.~(\ref{eq:CompleteMonotone}) of the main text is complete. Since
$R$ is monotonically nonincreasing under free operations, it follows
that a free transformation from $\sigma$ into $\rho$ is impossible
whenever $R(\rho)>R(\sigma)$. Since the resource theory is totally
ordered, it must be that a free transformation $\rho\rightarrow\sigma$
is always possible in this case. It remains to consider the case $R(\rho)=R(\sigma)$.
If $R(\rho)=0$, both $\rho$ and $\sigma$ are free states which
can be interconverted via free operations.
If $R(\rho)>0$, we will
show that we can transform $\rho$ arbitrarily close to $\sigma$,
i.e., for any $\delta>0$ there is a free operation $\Lambda_{f}$
such that $||\Lambda_{f}[\rho]-\sigma||_{1}<\delta$. Let us define
$\sigma_{\varepsilon}=\pqty{1-\varepsilon}\sigma+\varepsilon\mu_{f}$,
where $\mu_{f}\in\mathcal{F}$ achieves the minimum distance from
$\sigma$ to the set of free states. We obtain 
\begin{equation}
R(\sigma_{\varepsilon})\leq\left\Vert \sigma_{\varepsilon}-\mu_{f}\right\Vert _{1}=(1-\varepsilon)\left\Vert \sigma-\mu_{f}\right\Vert _{1}=(1-\varepsilon)R(\sigma),
\end{equation}
and thus $R(\sigma_{\varepsilon})<R(\rho)$ for all $\varepsilon>0$.
Again recalling that the resource theory is totally ordered, there
is a free transformation from $\rho$ into $\sigma_{\varepsilon}$
for all $\varepsilon>0$. The proof is complete by noting that $||\sigma_{\varepsilon}-\sigma||_{1}$
can be made arbitrarily small by choosing small enough $\varepsilon$.

\subsection*{Proof of Theorem 3}

Consider a resource monotone of the form~Eq. (\ref{eq:CompleteMonotone}).
As explained in the proof of Theorem~2,
for a totally ordered resource theory this monotone determines all
state transformations. We will now prove that 
\begin{equation}
R(\psi)=R(\phi)\label{eq:TotalOrderPure1}
\end{equation}
must hold for all pure states. By contradiction, assume that there
exist two pure states such that $R(\psi)>R(\phi)>0$. Consider the full
rank state $\rho_{\varepsilon}=(1-\varepsilon)\psi+\varepsilon\openone/d$
with $0<\varepsilon<1$. By continuity, it must be that $R(\rho_{\varepsilon})>R(\phi)$
for small enough $\varepsilon$. Recalling that $R$ fully determines
all state transformations, there exists a free transformation from
$\rho_{\varepsilon}$ into $\phi$. This is not possible, since $\rho_{\varepsilon}$
is a full rank state, and $\ket{\phi}$ is a pure resource state~\cite{FangPhysRevLett.125.060405}, see the section containing Lemma~\ref{lem:LemmaDecomposition}. Using
again the fact that $R$ determines all state transformations, Eq.~(\ref{eq:TotalOrderPure1})
implies that there are free transformations between any pair of pure
states, as claimed.

\end{document}